\documentclass[11pt]{article}
\makeatletter
\usepackage[T1]{fontenc}
\usepackage[utf8]{inputenc}
\usepackage[english]{babel}
\usepackage{lmodern}
\usepackage{microtype}
\usepackage{csquotes}
\usepackage[a4paper,top=2.35cm,bottom=2.35cm,left=2.55cm,right=2.55cm,footskip=1.15cm]{geometry}
\usepackage{amsmath,amssymb,amsthm,mathtools}
\usepackage{booktabs,tabularx,longtable,array,multirow}
\newcolumntype{Y}{>{\raggedright\arraybackslash}X}
\newcolumntype{L}[1]{>{\raggedright\arraybackslash}p{#1}}
\usepackage{graphicx}
\usepackage{enumitem}
\usepackage{xcolor}
\definecolor{elpblack}{gray}{0.00}
\definecolor{elpgraydark}{gray}{0.25}
\definecolor{elpgraymid}{gray}{0.55}
\definecolor{elpgraylight}{gray}{0.96}
\usepackage[hyperfootnotes=false]{hyperref}
\usepackage[nameinlink,noabbrev]{cleveref}
\usepackage{titlesec}
\usepackage{fancyhdr}
\usepackage{setspace}
\usepackage{caption}
\usepackage{subcaption}
\usepackage{url}
\usepackage{float}
\usepackage{needspace}
\usepackage{tikz}
\usetikzlibrary{arrows.meta,positioning,calc,fit,backgrounds,decorations.pathreplacing,patterns}
\usepackage[
 backend=biber,style=authoryear,natbib=true,maxcitenames=2,maxbibnames=20,
 uniquename=false,doi=true,url=true,isbn=false,giveninits=true,dashed=false
]{biblatex}
\defbibheading{bibliography}[\refname]{\section*{#1}}
\hypersetup{hidelinks,pdfborder={0 0 0}}
\setlist{nosep,leftmargin=*}
\titleformat{\section}{\Needspace{7\baselineskip}\large\bfseries}{\thesection}{0.65em}{}
\titleformat{\subsection}{\Needspace{5\baselineskip}\normalsize\bfseries}{\thesubsection}{0.65em}{}
\titleformat{\subsubsection}{\normalsize\itshape}{\thesubsubsection}{0.65em}{}
\fancypagestyle{plain}{\fancyhf{}\fancyfoot[C]{\small\thepage}}
\newtheorem{theorem}{Theorem}[section]
\newtheorem{proposition}[theorem]{Proposition}
\newtheorem{lemma}[theorem]{Lemma}

\theoremstyle{definition}

\newtheorem{criterion}[theorem]{Criterion}

\theoremstyle{remark}

\newcommand{\paperversion}[1]{\gdef\PaperVersion{#1}}
\newcommand{\paperdate}[1]{\gdef\PaperDate{#1}}
\newcommand{\paperseries}[1]{\gdef\PaperSeries{#1}}
\newcommand{\papersubtitle}[1]{\gdef\PaperSubtitle{#1}}
\newcommand{\paperaffiliation}[1]{\gdef\PaperAffiliation{#1}}
\newcommand{\papercontact}[1]{\gdef\PaperContact{#1}}

\AtBeginDocument{%
  \providecommand{\PaperVersion}{draft}%
  \providecommand{\PaperDate}{}%
  \providecommand{\PaperSeries}{ForesightFlow Research \textperiodcentered{} Event-Linked Perpetuals}%
  \providecommand{\PaperSubtitle}{}%
  \providecommand{\PaperAffiliation}{}%
  \providecommand{\PaperContact}{}%
}

\tikzset{
 pbox/.style={draw=black,rounded corners=2pt,align=center,inner sep=5pt,font=\footnotesize,text width=23mm,minimum height=9mm,fill=white},
 pwide/.style={draw=black,rounded corners=2pt,align=center,inner sep=5pt,font=\footnotesize,text width=31mm,minimum height=10mm,fill=white},
 psoft/.style={draw=black,dashed,rounded corners=2pt,align=center,inner sep=5pt,font=\footnotesize,text width=26mm,minimum height=9mm,fill=white},
 parr/.style={-{Latex[length=2.0mm]},line width=0.65pt,draw=black},
 pdash/.style={-{Latex[length=2.0mm]},line width=0.65pt,dashed,draw=black},
 pbrace/.style={decorate,decoration={brace,amplitude=4pt},line width=0.55pt,draw=black},
 ffbox/.style={pbox},ffsmall/.style={pbox,font=\scriptsize,inner sep=4pt},ffsoft/.style={psoft},ffarrow/.style={parr},ffdashed/.style={pdash}
}
\makeatletter
\renewcommand{\maketitle}{%
  \begin{center}
    {\small\scshape \PaperSeries\par}
    \vspace{0.8em}
    {\LARGE\bfseries \@title\par}
    \ifx\PaperSubtitle\@empty\else\vspace{0.35em}{\large \PaperSubtitle\par}\fi
    \vspace{0.75em}
    {\large \@author\par}
    \ifx\PaperAffiliation\@empty\else\vspace{0.18em}{\small \PaperAffiliation\par}\fi
    \ifx\PaperContact\@empty\else\vspace{0.12em}{\small \PaperContact\par}\fi
    \vspace{0.32em}{\small \mbox{\PaperDate\space\textbar{}\space Version \PaperVersion}\par}
  \end{center}
  \vspace{0.25em}
}
\makeatother

\makeatother
\hypersetup{pdftitle={Outcome Determination and Settlement Finality on Kalshi: Public State Paths, Prospective Measurement, and Empirical Identification},pdfauthor={Maksym Nechepurenko},pdfsubject={Event-Linked Perpetuals --- Kalshi Research Track},pdfkeywords={Kalshi, prediction markets, finality, settlement, lifecycle, censoring}}
\paperseries{ForesightFlow Research \textperiodcentered{} Event-Linked Perpetuals \textperiodcentered{} Paper 7.1}
\paperversion{r1.3.4}
\paperdate{September 2026}
\papersubtitle{Public State Paths, Prospective Measurement, and Empirical Identification}
\paperaffiliation{Research Department of Devnull FZCO, Dubai, UAE}
\papercontact{\href{mailto:maksym@devnull.ae}{maksym@devnull.ae} \quad ORCID 0000-0002-9515-8841}
\title{Outcome Determination and Settlement Finality on Kalshi}
\author{Maksym Nechepurenko}
\date{}
\begin{document}
\maketitle
\begingroup
\renewcommand{\thefootnote}{}
\footnotetext{\footnotesize\textit{ELP --- Kalshi Research Track.} This paper forms part of the Kalshi-focused research track within the Event-Linked Perpetuals series.}
\endgroup
\begingroup\footnotesize
\begin{abstract}
Event-contract settlement is a state path rather than a universal timestamp. Using a registered seven-day enrollment and seven-day administrative follow-up on Kalshi, this paper distinguishes market-object membership, lifecycle messages, endpoint states, and observation gaps. The MVE layer contains 7,611,594 tickers created inside the registered half-open interval; the reconstructed ordinary stock contains 152,694 markets at risk at the enrollment boundary, 3,477,227 already terminal records, and 11,410 boundary-uncertain records.

The follow-up identifies exact public endpoints for 7,357,576 MVE market objects: 7,237,993 through REST settlement time alone, 119,212 agreeing across WebSocket and REST, and 371 WebSocket-only; 254,018 lack an admissible exact endpoint. In the ordinary boundary cohort, 71,657 markets have an exact WebSocket endpoint and 81,037 do not. These are venue-level public-finality outcomes, not member-cash observations.

Exact determination-to-endpoint pairs are available for 126,806 MVE and 70,979 ordinary tickers, with respective mean durations of 9.676 and 322.50 seconds and observed maxima of 15 and 7,133 seconds. The contrast is descriptive, not causal. Transport gaps preserve two exact endpoint clocks but prevent the same rows from proving complete intermediate paths or absence of revisions; current settlement timers cannot be bound to historical rule versions. Public endpoint completion is therefore extensive, especially on REST, while path completeness remains narrower and family-specific. The exact-key historical crosswalk records earlier interim observations without changing the final cohort or endpoint facts.
\end{abstract}
\endgroup
\noindent\textbf{Keywords:} prediction markets; event contracts; finality; settlement; Kalshi; lifecycle; censoring; observation surfaces.\\
\textbf{JEL:} G13, G14, G18.

\section{Introduction}
Prediction-market research usually examines calibration, information aggregation, or market design \parencite{wolfers2004prediction,manski2006interpreting}. This paper studies a later boundary: the institutional path after ordinary trading approaches or reaches its end. A contract can close, receive a first result, remain contestable, be re-determined, and finalize only later. These intervals matter for capital lock, operational exposure, and any leveraged product that references an event contract.

Kalshi's public architecture makes the path measurable but not self-assembling. The documented state vocabulary includes initialized, active, inactive, closed, determined, disputed, amended, and finalized states. Some changes are explicit lifecycle messages; others are time-driven or visible only in changing representations. The public closed-state filter also coarsens several nonfinal states, and event-level status can depend on child markets \parencite{kalshiMarketLifecycle2026,kalshiLifecycleWS2026}. A terminal object therefore cannot be read as a complete history.

The empirical design separates four objects that are often conflated. First, a REST creation-time population defines market-object membership. Second, lifecycle streams reveal selected state changes and exact payload clocks. Third, current or terminal snapshots supply endpoint fields that can be exact without supplying the intervening path. Fourth, session gaps determine which path-prevalence and revision claims remain identifiable. A result can therefore have an exact finalization timestamp and still lack a complete observed path.

The paper asks: which finality transitions are exactly or interval-observable; how many enrolled markets reach an admissible endpoint by a fixed follow-up boundary; how long exact determination-to-endpoint intervals last; how much endpoint evidence is visible only in REST or WebSocket; and which stronger path, timer, revision, and member-cash claims remain blocked by the public observation process.

The contribution is empirical and methodological. The registered 7+7 design now closes the cohort and endpoint layers at exact ticker level, while preserving the difference between exact two-clock identification and complete-path observation. The resulting evidence shows that public finality can be nearly complete on a terminal REST surface even when only a much smaller subset has a fully linked determination path. Settlement speed is therefore not one scalar until the endpoint, risk set, observation surface, and path-completeness requirement are specified.

\section{Institutional object and evidence surfaces}
Let the public state space be
\[
\mathcal S_K=\{\text{initialized, active, inactive, closed, determined, disputed, amended, finalized}\}.
\]
The empirical object is the market-level valid-time process $S_i(t)$, not merely the representation returned at retrieval time. For retrieval $u$ write
\[
R_i(u)=\{\widehat S_i(u),T_i(u),\Gamma_i(u),M_i(u)\},
\]
where $T_i$ contains timing fields, $\Gamma_i$ the applicable rule/source representation, and $M_i$ other metadata. Lifecycle messages form a separate stream $E_i$, while content-addressed rule and source versions form $D_i$. The reconstructed history is
\[
H_i=\{R_i(u_k)\}_{k=1}^{K_i}\cup E_i\cup D_i.
\]

The finality vector is
\[
F_i^K=(t_{occ},t_{source},t_{decide},t_{close},t_{det,1},t_{det,J},t_{final},t_{member}),
\]
with each coordinate labelled exact, interval-observed, snapshot-only, proxy, unmeasured, or conflicting. Paper 7.1 owns the venue clocks $t_{close}$, $t_{det,1}$, $t_{det,J}$, and $t_{final}$; Papers 7.3 and 7.4 consume them without redefining them \parencite{elp73preprint,elp74preprint}.

\begin{figure}[H]
\centering
\begin{tikzpicture}[node distance=7mm and 8mm]
\node[ffsmall,text width=18mm] (init) {Initialized};
\node[ffsmall,text width=17mm,right=of init] (active) {Active};
\node[ffsmall,text width=17mm,right=of active] (closed) {Closed};
\node[ffsmall,text width=20mm,right=of closed] (det) {Determined};
\node[ffsmall,text width=18mm,right=of det] (final) {Finalized};
\node[ffsoft,text width=18mm,below=8mm of active] (inactive) {Inactive};
\node[ffsoft,text width=21mm,below=8mm of det] (contest) {Disputed / amended};
\draw[ffdashed] (init.east)--node[midway,above=8pt,font=\scriptsize]{open time}(active.west);
\draw[ffdashed] (active.east)--node[midway,above=8pt,font=\scriptsize]{close time}(closed.west);
\draw[ffarrow] (closed.east)--node[midway,above=8pt,font=\scriptsize]{first result}(det.west);
\draw[ffarrow] (det.east)--node[midway,above=8pt,font=\scriptsize]{settlement}(final.west);
\draw[ffarrow] (active)--(inactive);
\draw[ffarrow] (inactive)--(active);
\draw[ffdashed] (det)--(contest);
\draw[ffdashed] (contest.east) to[out=0,in=-90] (det.south);
\end{tikzpicture}
\caption{Measurement state graph. Solid arrows denote transitions with dedicated public lifecycle evidence in the frozen documentation; dashed arrows denote time-driven or incompletely messaged paths that require reconciliation.}
\label{fig:71-state}
\end{figure}

This is public institutional finality, not legal finality of the entire clearing system. The distinction follows the broader emphasis on explicit rules and settlement risk in financial-market infrastructure while keeping the empirical claim narrower \parencite{cpssiosco2012pfmi}.

\section{Identification}
\begin{proposition}[Terminal projection is non-injective]
A final REST representation containing result, current fields, and settlement timestamp does not identify the complete lifecycle path.
\end{proposition}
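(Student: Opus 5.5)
The plan is to prove non-injectivity directly, by exhibiting two distinct valid-time paths $S_i(\cdot)$ and $S_i'(\cdot)$ that the terminal projection maps to the same final REST representation. Write $\pi$ for the map sending a complete lifecycle path, together with its rule/source history, to the terminal retrieval $R_i(u)=\{\widehat S_i(u),T_i(u),\Gamma_i(u),M_i(u)\}$ taken at some $u$ after finalization. The statement is then that $\pi$ is not injective. It suffices to construct one pair $S\neq S'$ of admissible paths on $\mathcal S_K$ in Figure~\ref{fig:71-state} with $\pi(S)=\pi(S')$.

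The construction I would commit to has three steps.

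First, take the baseline path
\[
S:\ \text{initialized}\to\text{active}\to\text{closed}\to\text{determined}\to\text{finalized},
\]
with result $r$, close time $t_{close}$, a single determination at $t_{det,1}=t_{det,J}$, and settlement timestamp $t_{final}$.

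Second, take the alternative path
\[
S':\ \text{initialized}\to\text{active}\to\text{inactive}\to\text{active}\to\text{closed}\to\text{determined}\to\text{disputed}\to\text{amended}\to\text{determined}\to\text{finalized}.
\]
It has the same close time and the same final result $r$ and settlement timestamp $t_{final}$. The first determination, however, occurs at some $t'_{det,1}<t_{det,J}$, possibly with a different interim result $r_1$, and the last determination occurs at $t_{det,J}$. Each transition used is an edge of the state graph, so $S'$ is admissible.

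Third, check that every coordinate of the terminal retrieval coincides under $S$ and $S'$:
\begin{itemize}
\item the status $\widehat S_i(u)=\text{finalized}$;
\item the result field;
\item the settlement timestamp and the current timing fields $T_i(u)$, which record only current values such as close time and settlement time;
\item the \emph{current} rule/source representation $\Gamma_i(u)$;
\item the metadata $M_i(u)$.
\end{itemize}

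The main obstacle is the third step. One must justify that the final REST representation really retains no field that encodes the intermediate states: no inactive-interval record, no dispute flag, no interim result, and no first-determination time. I would handle this by appeal to the documented representation semantics cited in the introduction. Those sources describe current fields that are overwritten rather than appended, and a closed-state filter that coarsens the nonfinal states. On this reading, $T_i(u)$ and $\Gamma_i(u)$ are functions of the terminal state and of the last rule version only.

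If some field did carry partial history (for instance a stored determination time), I would fall back on a weaker pair that differs only in a time-driven dashed transition. One example is the presence or absence of an inactive spell inside the active interval; another is a dispute/amend loop that leaves $t_{det,J}$ and the result unchanged. No terminal field is documented to record either of these.

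Since two distinct paths yield one representation, the history $H_i$ cannot be recovered from $R_i(u)$ alone. The lifecycle stream $E_i$ and the versions $D_i$ are therefore needed, which is the non-injectivity claimed.
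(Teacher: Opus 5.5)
Your proposal is correct and is essentially the paper's own argument: the paper's sketch likewise contrasts a direct close--determine--finalize path with a path containing deactivation, reopening, repeated determination, or an overwritten close time. Both produce the same terminal REST projection, so path statistics are not point-identified from it, and you simply make the pair explicit and check the retrieval coordinates one by one. One small slip is that you call the inactive spell a dashed, time-driven transition, whereas the active--inactive arrows in Figure~\ref{fig:71-state} are solid (messaged); this does not affect the terminal-projection argument.
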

\emph{Proof sketch.} A direct close--determine--finalize path and a path containing deactivation, reopening, repeated determination, or overwritten close time can produce the same terminal projection. Any path statistic based only on that projection is therefore not point-identified.

\begin{lemma}[Polling identifies a transition interval]
If snapshots at $u_0<u_1$ observe $S_i(u_0)=z$ and $S_i(u_1)=z'\ne z$, and no matching exact lifecycle event is available, then $t_{z\rightarrow z'}\in(u_0,u_1]$.
\end{lemma}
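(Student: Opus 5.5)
The argument is an intermediate-value style bracketing on the valid-time process $S_i(t)$, read with the paper's convention that $S_i(t)$ is a piecewise-constant, right-continuous path on $\mathcal S_K$. The plan is to make explicit the two modelling assumptions the lemma relies on. First, each snapshot is a faithful valid-time reading, so $\widehat S_i(u_k)=S_i(u_k)$. Second, $t_{z\rightarrow z'}$ denotes the time at which the path first enters $z'$ after having been in $z$. Under right-continuity the state at a jump time equals the post-jump state, and this is what makes the right endpoint closed and the left endpoint open.

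The steps, in order, are as follows.

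\begin{enumerate}
\item Define $\tau=\inf\{t>u_0: S_i(t)\ne z\}$. Since $S_i(u_1)=z'\ne z$, this set is nonempty, so $\tau\le u_1$. Right-continuity at $u_0$ and $S_i(u_0)=z$ give $\tau>u_0$.
\item Define $t_{z\rightarrow z'}=\inf\{t>u_0:S_i(t)=z'\}$. The same two facts give $u_0<t_{z\rightarrow z'}\le u_1$. Right-continuity makes the infimum attained, so $S_i(t_{z\rightarrow z'})=z'$; in particular $u_1$ itself is admissible.
\item Use the hypothesis that no matching exact lifecycle event exists. Its role is to say that $E_i$ contributes no point clock that would sharpen the bracket; it does not shrink the interval. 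Hence the identified set from $H_i$ restricted to these two snapshots is exactly $(u_0,u_1]$.
\item Show the bound is sharp. For any $s\in(u_0,u_1]$, the path that jumps $z\to z'$ at $s$ is consistent with both snapshots. This is the same observational-equivalence device used for the non-injectivity proposition above.
\end{enumerate}

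The main obstacle is definitional rather than technical. When the path visits intermediate states, such as $z\to\text{inactive}\to z'$ or several oscillations, "the" transition time is ambiguous. I would handle this by stating that the lemma brackets the first entry into $z'$ after $u_0$ (equivalently, the last exit from $z$ before first reaching $z'$, which lies in $[u_0,u_1)$). I would note that every such candidate time lies in the same half-open interval. Boundary conventions are the one place a careless proof could go wrong: if snapshots record the pre-jump state at a tie, the interval would become $[u_0,u_1)$. The proof should therefore fix right-continuity and retrieval-time stamping up front.
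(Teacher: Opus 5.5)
Your argument is correct and is essentially the paper's own reasoning made explicit. The paper gives no formal proof, only the remark that assigning $u_1$ as an exact timestamp would add an unregistered measurement model. Your sharpness step, where every $s\in(u_0,u_1]$ yields a path consistent with both snapshots, is what justifies that remark. The paper never states your right-continuity and faithful-snapshot conventions; they are unstated assumptions you are supplying, and they are what make the interval open at $u_0$ and closed at $u_1$.

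One slip in your obstacle paragraph should be fixed. Under your own right-continuous convention, the last exit from $z$ before the first entry into $z'$ is a jump time. It is strictly after $u_0$ by right-continuity at $u_0$, and it can equal $u_1$ through a direct jump at $u_1$. So it lies in $(u_0,u_1]$, not $[u_0,u_1)$. It is also not equivalent to the first entry into $z'$ when the path passes through an intermediate state, e.g.\ $z\to\text{inactive}\to z'$.

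With that correction, your closing claim holds: every candidate transition time lies in the single half-open interval $(u_0,u_1]$.
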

Assigning $u_1$ as an exact timestamp would add an unregistered measurement model. The ledger preserves the interval.

\begin{proposition}[Prospective inception paths do not require historical exhaustion]
Let $\mathcal C[T_0,T_1)$ be markets whose creation messages are completely captured during a registered interval. Conditional on audited channel continuity, their post-creation paths are identifiable from prospective evidence even when the pre-$T_0$ historical universe is incomplete.
\end{proposition}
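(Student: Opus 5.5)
The plan is to make the statement precise as an identification claim about the history $H_i$ of each $i\in\mathcal C[T_0,T_1)$. Write $\mathcal U_{<T_0}$ for the pre-$T_0$ historical universe. The key observation is that for an inception-enrolled market the valid-time process $S_i(t)$ has no support before its creation time $t_{c,i}\ge T_0$. I take ``identifiable'' to mean: the target path functional (the sequence of lifecycle states and their exact or interval timestamps on $[t_{c,i},T_{end}]$) is a deterministic function of the prospective evidence $\{R_i(u_k)\}\cup E_i\cup D_i$ with $u_k\ge T_0$. That evidence must be independent of how $\mathcal U_{<T_0}$ is completed.

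The argument proceeds in three steps.

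First, membership. Because creation messages are completely captured on $[T_0,T_1)$, the set $\mathcal C[T_0,T_1)$ is itself measurable from prospective evidence. Two completions of $\mathcal U_{<T_0}$ that agree on post-$T_0$ evidence therefore yield the same cohort. The incompleteness of the historical universe enters only through markets created before $T_0$, which are excluded by construction. This contrasts with the stock-based boundary cohort, where left truncation makes the at-risk set depend on unobserved history.

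Second, the path, conditional on audited channel continuity. I interpret continuity as the absence of session gaps on the lifecycle channel over $[t_{c,i},T_{end}]$, or more weakly as every gap being bracketed by snapshots. Two cases follow.
\begin{itemize}
\item Where $E_i$ contains an exact lifecycle message, the transition time is read off the payload clock.
\item Where a transition is time-driven or unmessaged (the dashed arrows of Figure~\ref{fig:71-state}), I invoke the polling lemma: consecutive snapshots localize $t_{z\to z'}$ to $(u_0,u_1]$, which is the registered interval-identified coordinate.
\end{itemize}
Continuity guarantees that no message falls into an unobserved window. The observed sequence therefore exhausts the messaged transitions, and the initial state is pinned by the creation message rather than by any pre-$T_0$ snapshot.

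Third, non-dependence on history. For any two data-generating processes that coincide on post-$T_0$ evidence for cohort members but differ arbitrarily on $\mathcal U_{<T_0}$, the constructed path functional is equal. This is the formal content of ``not requiring historical exhaustion.''

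The main obstacle is the second step. The terminal-projection proposition warns that unmessaged revisions, such as an overwritten close time, can be invisible even without transport gaps. Continuity alone therefore identifies only the messaged skeleton plus polling intervals, not every intermediate state. I would handle this by restricting the identified object explicitly: exact coordinates come from $E_i$, interval coordinates come from the lemma, and any revision not represented in $E_i\cup D_i$ or in a changed snapshot is labelled unmeasured. With that restriction the proposition holds as stated; without it, the claim would overreach exactly where the non-injectivity result bites.
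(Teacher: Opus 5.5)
Your proposal is correct and follows the paper's own, very brief justification: the paper only notes that the historical archive bears on claims about the pre-$T_0$ stock, not on membership in a prospectively observed birth cohort. Your membership and non-dependence steps are exactly this point, and your path step simply unpacks the conditioning on audited continuity, restricting the identified object to exact payload clocks plus polling intervals in line with the paper's evidence grades.

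One caution: only your strong reading of continuity (no lifecycle-channel gaps over $[t_{c,i},T_{end}]$) supports the claim that no message falls in an unobserved window. Under the weaker ``gaps bracketed by snapshots'' reading, the polling lemma localizes only the net change across a gap. That is precisely why the paper says gaps block absence-of-revision claims even when both endpoint clocks are exact, so you should drop that alternative reading.
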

The historical archive affects claims about the stock of earlier markets, not membership in a prospectively observed birth cohort.

\begin{criterion}[Path admissibility]
A market enters a full-path denominator only when its coverage class establishes observation from the first transition required by the claim through the endpoint or censoring time. Terminal records may support terminal-field distributions but not unobserved path prevalence.
\end{criterion}

\section{Prospective design}
\subsection{Boundary-reconciled start population}
The ordinary lifecycle collector is authenticated and subscribed before $T_0$. After subscription acknowledgement, $T_0$ is fixed and the live market route is paginated with MVE excluded \parencite{kalshiGetMarkets2026,kalshiLifecycleWS2026}. Because pagination takes time, the baseline is not treated as an instantaneous photograph. Each page carries retrieval time; lifecycle messages received during the crawl reconcile creations and state changes. Markets created after $T_0$ belong to the inception cohort even if encountered by REST. Markets present before $T_0$ and nonterminal at the boundary form the at-risk stock.

\subsection{Enrollment and follow-up}
Every ordinary or MVE creation message observed during $[T_0,T_0+7d)$ enters the corresponding inception cohort. Ordinary and multivariate lifecycle streams remain separate \parencite{kalshiLifecycleWS2026,kalshiMVELifecycle2026}. All enrolled markets are followed through $T_0+14d$; new markets during days 8--14 remain in raw capture but not in the primary cohort.

\begin{figure}[H]
\centering
\begin{tikzpicture}[x=1cm,y=1cm]
\draw[->,line width=.7pt] (.3,0)--(12.3,0);
\foreach \x/\lab in {1.3/{$T_0$},6.5/{$T_0+7d$},11.7/{$T_0+14d$}}{\draw (\x,-.12)--(\x,.12);\node[above,font=\small] at (\x,.12){\lab};}
\draw[line width=3pt] (1.3,-.65)--(6.5,-.65);
\node[below,font=\footnotesize,align=center] at (3.9,-.72){Enrollment: ordinary and MVE births};
\draw[line width=3pt,dashed] (6.5,-1.45)--(11.7,-1.45);
\node[below,font=\footnotesize,align=center] at (9.1,-1.52){Administrative follow-up: enrolled markets only};
\node[ffsoft,text width=34mm] at (1.3,-2.55){Boundary-reconciled ordinary stock at risk};
\end{tikzpicture}
\caption{Registered prospective design. The stock cohort is boundary-reconciled; inception cohorts are defined by prospective creation evidence.}
\label{fig:71-cohort}
\end{figure}

\subsection{Reconciliation and coverage classes}
REST refreshes occur after material lifecycle events, around close and expected-expiration boundaries, after determination or re-determination, after settlement, and after reconnect gaps. The analysis distinguishes:
\begin{itemize}
\item complete prospective observation from creation;
\item prospective observation of the stock at the enrollment boundary;
\item left-truncated prospective observation;
\item historical terminal observation only;
\item interval-reconstructed observation; and
\item conflicting evidence.
\end{itemize}
The bounded current cross-sectional subset is used for schema, source/rule, and terminal cross-sectional distributions, not for historical path rates.

\section{Enrollment population and observation-surface results}\label{sec:71-interim-results}

The population layer is bound to the registered half-open interval
\[
[T_0,T_{\mathrm{enroll}})=
[\text{2026-08-15 19:51:29.025 UTC},
  \text{2026-08-22 19:51:29.025 UTC}).
\]
The initial monolithic cursor chain was excluded after an out-of-scope-row and cursor-link audit. Population authority instead comes from 190 terminal-validated leaf shards whose logical union covers all 604,800 seconds. Durable per-leaf outputs and a global union reproduce
\[
7{,}617{,}371
=
7{,}611{,}594 + 5{,}777,
\]
where the first term is the raw reference count, the second is the locally included unique-ticker population, and the third is the retained boundary-overlap exclusion count \parencite{nechepurenko2026kmve}. Exact duplicate groups and parent/identity conflicts are zero in the global union.

\begin{table}[H]
\centering\footnotesize
\caption{Frozen Kalshi enrollment-population and observation-surface denominators. Counts have different statistical units and are not an additive funnel.}
\begin{tabularx}{\textwidth}{@{}XrX@{}}
\toprule
Object & Exact ticker count & Interpretation \\
\midrule
REST enrollment live materialization & 7,611,594 & Market objects with REST creation time inside the registered interval \\
WS enrollment lifecycle & 2,498,640 & Exact tickers with any qualifying lifecycle observation during enrollment \\
WS enrollment created notifications & 1,487,330 & Exact tickers with an observed creation notification during enrollment \\
WS left-truncated evidence class & 1,011,310 & Lifecycle tickers without an enrollment creation notification \\
WS post-enrollment continuation & 107,914 & Post-boundary tickers retained outside the primary cohort \\
WS created $\cap$ REST & 276,177 & Exact ticker intersection, without gap extrapolation \\
WS lifecycle $\cap$ REST & 378,129 & Exact ticker intersection, without gap extrapolation \\
\bottomrule
\end{tabularx}
\label{tab:71-denominators}
\end{table}

\begin{figure}[H]
\centering
\resizebox{\textwidth}{!}{\begin{tikzpicture}[x=1cm,y=.84cm,font=\footnotesize]
\node[anchor=west,text=elpblack] at (0,6.15) {\textbf{Exact ticker counts by observation surface}};
\node[anchor=east,align=right,text width=3.65cm,font=\scriptsize] at (-.28,5.40) {REST enrollment\\market objects};
\fill[black!25] (0,5.20) rectangle (6.9000,5.60);
\node[anchor=west,text=elpblack] at (6.9800,5.40) {7,611,594};
\node[anchor=east,align=right,text width=3.65cm,font=\scriptsize] at (-.28,4.15) {WS enrollment\\lifecycle tickers};
\fill[black!49] (0,3.95) rectangle (2.2650,4.35);
\node[anchor=west,text=elpblack] at (2.3450,4.15) {2,498,640};
\node[anchor=east,align=right,text width=3.65cm,font=\scriptsize] at (-.28,2.90) {WS enrollment\\created notifications};
\fill[black!67] (0,2.70) rectangle (1.3483,3.10);
\node[anchor=west,text=elpblack] at (1.4283,2.90) {1,487,330};
\node[anchor=east,align=right,text width=3.65cm,font=\scriptsize] at (-.28,1.65) {WS lifecycle $\cap$ REST};
\fill[black!55] (0,1.45) rectangle (0.3428,1.85);
\node[anchor=west,text=elpblack] at (0.4228,1.65) {378,129};
\node[anchor=east,align=right,text width=3.65cm,font=\scriptsize] at (-.28,.40) {WS created $\cap$ REST};
\fill[black!68] (0,.20) rectangle (0.2504,.60);
\node[anchor=west,text=elpblack] at (0.3304,.40) {276,177};
\draw[black!35] (0,-.24) -- (6.9,-.24);
\node[anchor=north west,text=black!75,align=left,text width=8.2cm,font=\scriptsize] at (-3.8,-.48) {Distinct observation surfaces; bar lengths do not form a common population funnel.};
\end{tikzpicture}
}
\caption{Ticker counts on the frozen REST and WebSocket observation surfaces. Bar height is not evidence of one common population: REST membership is defined by current market records and exact creation time, while the WebSocket sets are notification histories under documented session gaps.}
\label{fig:71-observation-surfaces}
\end{figure}

The created-notification intersection represents 18.57\% of the WS created-notification set and 3.63\% of the REST creation-time population. The all-lifecycle intersection represents 15.13\% of the WS lifecycle set and 4.97\% of the REST population. These ratios diagnose surface non-equivalence; they are not capture-recall estimates because the surfaces have different institutional semantics and the WebSocket stream contains left-truncated and transport-gap exposure.

\subsection{What the population release identifies and how the follow-up extends it}
The KMVE V1 release identifies the complete REST MVE market-object population for the enrollment interval, exact boundary treatment, deduplication, field coverage, and exact ticker-set relations across the frozen REST and WebSocket surfaces \parencite{nechepurenko2026kmve}. The subsequent key-level 7+7 follow-up does not redefine that population. It adds three distinct objects:
\begin{enumerate}
  \item a corrected ordinary boundary stock, separating reconstructed-at-risk, terminal-before-boundary, and boundary-uncertain records;
  \item an exact endpoint union across WebSocket and REST evidence through the administrative boundary; and
  \item exact determination-to-endpoint pairs with separate path and gap qualifications.
\end{enumerate}
The follow-up therefore closes public endpoint status for a large share of the frozen cohorts without upgrading every endpoint to a complete lifecycle path. Current fields observed later remain current observations unless they contain an exact historical timestamp. Member-level cash realization remains outside the public-market analysis.

\subsection{Acquisition failures as part of the measurement result}
The collection design uses independently validated temporal shards rather than a single retrieval chain. It shows why population validity depends on explicit boundaries, cursor non-cycles, durable raw evidence, and reproducible derived layers. The resulting population is supported by per-leaf records and a deterministic global union, so the reported counts can be regenerated without relying on a single mutable index.

\section{Final 7+7 follow-up results}\label{sec:final-results}

\subsection{Corrected cohort construction}
The ordinary baseline contained 3,641,331 exact tickers, but it was acquired over time and cannot be treated as one instantaneous at-risk stock. Exact boundary reconstruction classifies 3,477,227 as terminal at or before $T_0$, 152,694 as reconstructed at risk at $T_0$, and 11,410 as boundary-uncertain. The primary ordinary follow-up cohort is the 152,694 reconstructed-at-risk keyset. The archived payload surface does not provide an exact valid-time ordinary birth keyset during enrollment: 140,900 WebSocket-observed birth notifications remain a separate observational class and are not promoted into the exact primary cohort.

The MVE primary cohort remains the archived REST creation-time population of 7,611,594 unique market objects. These two cohorts have different formation rules and cannot be pooled into one venue-wide finality rate.

\begin{table}[H]
\centering\footnotesize
\caption{Corrected 7+7 primary cohorts and boundary classes. Counts are exact tickers.}
\label{tab:cohorts-final}
\begin{tabularx}{\textwidth}{@{}L{0.32\textwidth}rY@{}}
\toprule
Class & Tickers & Interpretation \\
\midrule
Ordinary terminal at/before $T_0$ & 3,477,227 & Outside the prospective at-risk stock \\
Ordinary reconstructed at risk at $T_0$ & 152,694 & Primary ordinary follow-up cohort \\
Ordinary boundary-uncertain & 11,410 & Excluded from primary rates \\
Exact ordinary enrollment births & 0 & Exact valid-time birth keyset not identified in the archived payload \\
MVE REST enrollment objects & 7,611,594 & Exact creation-time membership in $[T_0,T_{enroll})$ \\
\bottomrule
\end{tabularx}
\end{table}

\subsection{Endpoint-source decomposition}
For MVE, 7,357,576 market objects have an exact public endpoint by the follow-up boundary. Most are visible only through an exact REST settlement timestamp; a smaller intersection has both WebSocket and REST evidence, and 371 are WebSocket-only. The remaining 254,018 have no admissible exact endpoint by the boundary. Thus 96.66 percent of the MVE cohort has an exact public finalization endpoint, while 3.34 percent remains without one under the registered evidence policy.

For the ordinary $T_0$ stock, 71,657 markets have an exact WebSocket endpoint and 81,037 do not, corresponding to 46.93 and 53.07 percent of the corrected cohort. These proportions describe the reconstructed boundary stock, not the entire 3.64-million-row ordinary enumeration and not a complete ordinary inception population.

\begin{table}[H]
\centering\footnotesize
\caption{Exact endpoint classes at the 14-day follow-up boundary. Public venue finalization is distinct from member cash realization.}
\label{tab:endpoints-final}
\begin{tabularx}{\textwidth}{@{}L{0.16\textwidth}L{0.31\textwidth}r r Y@{}}
\toprule
Cohort & Endpoint class & Tickers & Cohort share & Evidence interpretation \\
\midrule
MVE & REST exact only & 7,237,993 & 95.09\% & Exact REST settlement time; no exact WS endpoint in the joined registry \\
MVE & WS and REST exact agree & 119,212 & 1.57\% & Same exact endpoint supported on both surfaces \\
MVE & WS exact only & 371 & 0.005\% & Exact lifecycle endpoint without REST endpoint in the joined evidence \\
MVE & No exact endpoint by follow-up & 254,018 & 3.34\% & No admissible exact endpoint; current status is not back-projected \\
\addlinespace
Ordinary & WS exact endpoint & 71,657 & 46.93\% & Exact public endpoint in the reconstructed $T_0$ stock \\
Ordinary & No exact endpoint by follow-up & 81,037 & 53.07\% & No admissible exact endpoint in the registered follow-up \\
\bottomrule
\end{tabularx}
\end{table}

The decomposition is itself a measurement result. A terminal REST surface can identify finalization for millions of market objects while providing no determination path. Conversely, WebSocket evidence supplies transition histories for a much smaller keyset. Neither surface is a universal gold standard.

\subsection{Exact determination-to-endpoint clocks}
The key-level lifecycle registries contain 126,806 exact MVE determination-to-endpoint pairs and 70,979 exact ordinary pairs. The MVE mean is 9.676 seconds and the observed maximum is 15 seconds. The ordinary mean is 322.50 seconds and the maximum is 7,133 seconds. The difference between 71,657 ordinary exact endpoints and 70,979 exact pairs leaves 678 endpoint-observed markets without a compatible exact determination-to-endpoint pair.

\begin{table}[H]
\centering\footnotesize
\caption{Exact public determination-to-endpoint clock subsets. These are ticker-level empirical distributions, not causal venue comparisons.}
\label{tab:durations-final}
\begin{tabularx}{\textwidth}{@{}L{0.18\textwidth}r r r Y@{}}
\toprule
Family & Exact pairs & Mean (s) & Maximum (s) & Qualification \\
\midrule
MVE & 126,806 & 9.676 & 15 & Exact payload/endpoint clocks; collection and event dependence retained \\
Ordinary & 70,979 & 322.50 & 7,133 & Exact clocks in the reconstructed $T_0$ stock; not a full ordinary birth census \\
\bottomrule
\end{tabularx}
\end{table}

The large difference between the family-specific empirical distributions is not a causal MVE-versus-ordinary effect. MVE objects are generated under a concentrated collection architecture, share parent events and legs, and use a different public observation surface. The estimand is native finality within each family.

\subsection{Exact clocks versus path completeness}
The follow-up pairs 94 ordinary-lifecycle-stream and 293 MVE-lifecycle-stream gap/resumption receipts. Endpoint-aware market classification identifies a gap intersecting the relevant observed path for 6,421,306 MVE objects and no such intersection for 1,190,288. In the ordinary stock, only eight markets are gap-intersected and 152,686 are not.

A transport gap does not invalidate two exact payload clocks. It does block stronger claims that no intermediate determination, amendment, or reopening occurred. Accordingly, the exact duration distributions in \cref{tab:durations-final} are admitted as two-clock facts, while population-wide revision incidence and complete-path prevalence remain qualified. The MVE evidence makes the distinction especially visible: 7,230,770 market objects have exact finalization without a linked exact WebSocket lifecycle path.

\subsection{Timer and revision boundaries}
The current REST settlement-timer field is present for all 7,611,594 MVE objects and absent from the ordinary registry. The public evidence does not bind that current field to the historical rule version applicable at the last determination. A primary timer-residual estimand is therefore blocked; only a separately labelled current-timer sensitivity is possible.

No multi-version determination path is identified in the corrected exact-pair registries. Because gap exposure is material for MVE, this is an observed lower bound rather than evidence that revisions never occurred. Absence of an observed dispute or amendment likewise does not establish absence of contestability.

\section{Historical-prefix exact-clock validation}\label{sec:historical-prefix-validation}

The final-cohort estimates above are constructed from the registered seven-day enrollment population and the completed follow-up registry.  A separate reconstruction was therefore used to test whether the earlier exact-clock audit could be reproduced at ticker level without allowing that historical prefix to redefine the final cohort.  The reconstructed registry is bound to the original archived prefixes---895 ordinary-lifecycle-stream chunks (98,554 frames) and 24,065 MVE-lifecycle-stream chunks (2,743,982 frames)---and retains two distinct published pair definitions: first determination to settlement and last determination to settlement.

For each definition, the ordinary lifecycle stream reproduces 31,828 exact positive-duration pairs.  The MVE lifecycle stream reproduces 653,479 exact pairs, of which 130,973 have positive duration and 522,506 have equal determination and settlement timestamps at the documented one-second precision.  The equal-second class is a representation of timestamp precision only.  It does not identify simultaneous internal processing, absence of contestability, member cash availability, or zero capital lock.

\begin{table}[t]
\centering
\small
\caption{Historical-prefix exact-clock registry and exact overlap with the final key-level follow-up registry. Counts are reported separately for each published pair definition; the two definitions have identical aggregate class counts but remain distinct registry records.}
\label{tab:r23a-crosswalk}
\begin{tabular}{@{}lrrrr@{}}
\toprule
Population & Historical pairs & Exact overlap & Historical-only & Final-registry-only \\
\midrule
Ordinary & 31,828 & 29,190 & 2,638 & 41,789 \\
MVE & 653,479 & 21,950 & 631,529 & 104,856 \\
\bottomrule
\end{tabular}
\par\smallskip
\begin{minipage}{0.94\textwidth}\footnotesize\raggedright
\textit{Interpretation.} (1) Ordinary: high overlap; the final registry also contains later or differently qualified pairs. (2) MVE: the historical prefix and final enrollment registry are distinct observation populations.
\end{minipage}
\end{table}

Table~\ref{tab:r23a-crosswalk} reports the bidirectional exact-key comparison with the final follow-up registry.  Under each pair definition, 29,190 ordinary historical-prefix pairs and 21,950 MVE historical-prefix pairs match the final registry exactly.  These correspond to 91.71\% and 3.36\% of the respective historical-prefix registries.  Viewed in the reverse direction, the historical prefix covers 41.12\% of final ordinary pairs and 17.31\% of final MVE pairs.  The residual classes are preserved rather than assigned an economic cause by assumption.  In particular, they are not automatically labelled cohort attrition, missed lifecycle events, or schema loss.

This validation has two consequences.  First, it reproduces the original clock parser and both pair semantics from immutable raw evidence.  Second, it shows why the historical prefix cannot serve as the denominator for the prospective article: especially for MVE markets, it represents a different observation population.  All headline cohort, endpoint, and duration results in this paper therefore continue to use the final key-level follow-up registry.  The historical-prefix registry is used only as a reproducibility and sensitivity layer.

\section{Estimands and inference}
For admissible clocks define
\[
L_i^{CD}=t_{det,1,i}-t_{close,i},\qquad
L_i^{DF}=t_{final,i}-t_{det,1,i},\qquad
L_i^{CF}=t_{final,i}-t_{close,i}.
\]
Let $t_{det,J,i}$ and $\delta_{J,i}$ be the last observed determination and its version-consistent timer. Define the post-timer residual
\[
W_i=t_{final,i}-(t_{det,J,i}+\delta_{J,i})
\]
and the revision interval
\[
R_i=t_{det,J,i}-t_{det,1,i}.
\]
$W_i$ is descriptive; a positive value is not called a rule violation without an applicable requirement and sufficient timestamp precision.

\begin{table}[H]
\centering\footnotesize
\caption{Primary estimand registry.}
\begin{tabularx}{\textwidth}{@{}p{0.20\textwidth}p{0.19\textwidth}p{0.20\textwidth}X@{}}
\toprule
Estimand & Risk-set entry & Endpoint/censoring & Dependence unit \\
\midrule
$L^{CD}$ & reliable close & first determination; administrative censoring & event ticker; series sensitivity \\
$L^{DF}$ & first determination & finalization; administrative censoring & event ticker \\
$W$ & last determination & finalization relative to timer & event ticker \\
$R$ & first determination & last determination & event ticker \\
Path recovery & cohort entry & last reliable observation & acquisition session and event \\
\bottomrule
\end{tabularx}
\end{table}

Exact and interval clocks are reported separately. Single-endpoint survival summaries use Kaplan--Meier only when the censoring formulation is defensible \parencite{kaplan1958nonparametric}. Multistate or competing-transition descriptions use Aalen--Johansen/counting-process methods only for observed transitions \parencite{aalen1978transition,andersen1993counting}. Uncertainty is obtained by cluster bootstrap at event level, with series-level clustering as a sensitivity; market tickers sharing an event are not treated as independent replications.

\section{Registered analyses and interpretation}
\textbf{A1 --- Endpoint-source value.} Compare exact endpoint identification under REST only, WebSocket only, and their exact-ticker union. The primary result is the change in admissible endpoint counts and evidence grades, not a declaration that one channel is correct and the other erroneous.

\textbf{A2 --- Native finality timing.} Report exact determination-to-endpoint distributions separately for ordinary and MVE families. The paper does not infer a causal architecture effect from their raw difference.

\textbf{A3 --- Path completeness.} Distinguish exact two-clock identification from gap-free path and revision-history identification. An intervening transport gap does not erase exact endpoints but blocks an absence-of-revision claim.

\textbf{A4 --- Timer alignment.} Evaluate a historical timer residual only when the rule/timer version applicable to the last determination is identified. Current-timer sensitivity remains descriptive and non-historical.

\textbf{A5 --- Historical-prefix crosswalk.} The earlier interim first/last pair registries are crosswalked to the final cohorts without redefining the final estimands. The analysis records the exact disposition of the earlier equal-second class rather than using it as a substitute denominator.

\section{Validity and reproducibility}
The claims require: durable cohort keysets; exact endpoint-source joins; semantic duplicate control; explicit gap/session lineage; coordinate-specific evidence grades; and reproducible generation of every table and figure. The population and endpoint facts resolve to archived key-level registries. A failed path, timer, or reconciliation check suppresses only the stronger dependent claim rather than replacing it with a convenient proxy.

The interim-to-final exact-key crosswalk audits historical interim counts without changing the registered cohort or endpoint definitions.

\section{Limitations}
The study measures public venue states, not internal deliberation, legal finality, or member-level cash availability. Seven days of enrollment do not represent every calendar regime. The ordinary primary cohort is a reconstructed $T_0$ stock because exact valid-time enrollment births are not identified in the archived payload. The MVE cohort is population-complete at the REST market-object level but highly dependent across events, collections, and selected legs.

REST settlement timestamps identify public finalization without reconstructing the intervening path. WebSocket gaps can intersect path-risk intervals even when both endpoint clocks are exact. Revision and contestability prevalence are therefore lower-bound or blocked claims on affected paths. Current timer fields cannot be projected backward to historical determination versions. The earlier interim pair registry is retained as an exact-key historical-prefix crosswalk only; it does not promote residual classes to a complete historical reproduction.

\section{Conclusion}
Kalshi's public finality is now empirically identified at two distinct layers. The first is the endpoint layer. Exact public finalization is observed for 7,357,576 of 7,611,594 MVE market objects and for 71,657 of 152,694 ordinary markets reconstructed as at risk at the enrollment boundary. The second is the path layer. Only 126,806 MVE and 70,979 ordinary tickers have exact determination-to-endpoint pairs, and path completeness remains narrower still under session gaps.

This divergence is the paper's central result. A venue can expose a nearly complete terminal REST population while revealing exact lifecycle paths for a small subset. MVE exact pairs finalize within a 15-second observed range and average 9.676 seconds; ordinary exact pairs average 322.50 seconds and extend to 7,133 seconds. These are native, family-specific clock distributions, not a causal comparison and not evidence of member cash realization.

The result replaces a scalar notion of settlement speed with a graded finality object: cohort membership, endpoint source, exact clocks, path coverage, timer version, and economic discharge state. The historical interim-prefix reconstruction retains the interim-to-final audit without altering the qualified empirical conclusions supported by the final key-level follow-up.

\appendix
\section{Evidence grades}
Each clock or transition is classified as:
\begin{itemize}
\item exact from an explicit event or source timestamp;
\item interval-observed through polling or current-snapshot only; or
\item secondary proxy, unmeasured, or conflicting evidence.
\end{itemize}
Analytical tables never collapse these classes into one generic nonmissing indicator.

\section*{Data and Code Availability}
The seven-day MVE population and structural data are available as the \emph{Kalshi Multivariate Event Market Materialization Dataset (KMVE)}, Mendeley Data, V1, doi: \href{https://doi.org/10.17632/fn65786cg6.1}{10.17632/fn65786cg6.1}. The lifecycle, gap, and key-level endpoint materials support the reported analyses; restricted operational records and raw authenticated payloads are not distributed with this article. The public materials therefore provide the documentation and reproducible sources appropriate to the evidence described in the manuscript.

\section*{Generative AI Disclosure}
OpenAI ChatGPT and Codex were used for editorial and technical assistance during manuscript preparation. The author made all substantive research decisions, reviewed the final manuscript, and assumes full responsibility for its contents.

\section*{Funding}
This research received no external funding.

\section*{Competing Interests}
The author is affiliated with the Research Department of Devnull FZCO and leads the ForesightFlow research programme. No external sponsor influenced the research design, analysis, interpretation, or decision to publish. The article does not evaluate a commercial product or make investment recommendations.

\printbibliography[heading=bibliography,title={References}]
\end{document}